\theoremstyle{plain}
\def\Box{\vcenter{\vbox{\hrule\hbox{\vrule
     \vbox to 8.8pt{\hbox to 10pt{}\vfill}\vrule}\hrule}}}
\newcommand{\tabincell}[2]{\begin{tabular}{@{}#1@{}}#2\end{tabular}}
\newcommand{\Zz}{{\mathbb Z}}
\newcommand\cc{{\mathcal C}}        %
\newcommand\cF{{\mathbf c}}
\newtheorem{thm}{Theorem}[section]
\newtheorem{lem}[thm]{Lemma}
\date{}
\begin{document}

%

\title{Asymptotically optimal codebooks derived from generalised bent functions\thanks{The work was supported by   the Science and
Technology Development Fund of Tianjin Education Commission
for Higher Education No. 2018KJ215.}
}

\author{Qiuyan Wang\thanks{Q. Wang is with the School of Computer Science and
Technology, Tiangong University, Tianjin,
300387, China, and with the Provincial Key Laboratory of Applied Mathematics, Putian University, Putian, Fujian 351100, China. Email: wangyan198801@163.com},
Yang Yan\thanks{Y. Yan is with the School of Information Technology and Engineering, Tianjin University of Technology and Education, Tianjin, 300222, Yang Yan is the corresponding author, Email: yanyangucas@126.com},
Chenhuang Wu\thanks{C. Wu is with the Provincial Key Laboratory of Applied Mathematics, Putian University. Email:  ptuwch@163.com} and
Chun'e Zhao\thanks{College of Science,
China University of Petroleum, Qingdao 266555, Shandong, China, Email: zhaochune1981@163.com}
}
\maketitle

\let\thefootnote\relax\footnotetext{}

\let\thefootnote\relax\footnotetext{}

\begin{abstract}
Codebooks are required to have small inner-product correlation in many practical applications, such as direct spread code division multiple access
communications, space-time codes and compressed sensing. In general, it is difficult to construct optimal codebooks. In this paper, two kinds of codebooks
are presented and  proved to optimally optimal with respect to the welch bound. Additionally, the constructed codebooks in this paper have new parameters.

{\bf Keywods}: codebook, asymptotic optimality, Welch bound, generalised bent function.

\end{abstract}

\section{Introduction}
An $(N,K)$ codebook $\cc$ is defined to be a set of unit-norm complex vectors $\{\mathbf{c}_i\}_{i=0}^{N-1} $ in $\mathbb{C}^K$  over an alphabet \textit{A}. Let
\begin{align*}
I_{\max}(\cc)=\underset{0\leq i\neq j \leq N-1}{\max}|\mathbf{c}_i\mathbf{c}_j^H|,
\end{align*}
where $\mathbf{c}_j^H$ denotes the Hermite transpose of vector $\mathbf{c}_j$. The maximum inner-product correction $I_{\max}(\cc)$ is a performance measure of a codebook $\cc$ in practical applications. In code division multiple access (CDMA) systems, one important problem is to minimize the codebook’s maximal cross-correlation amplitude $I_{\max}(\cc)$.

For a given $K$, it is usually  desirable that $N$ is as large as possible and $I_{\max}(\cc)$ is as small as possible simultaneously. However, the parameters $N$, $K$ and
$I_{\max}(\cc)$ of a codebook have to satisfy the Welch bound \cite{Welch}. That is, there is a tradeoff between the codeword length $K$, the set size $N$ and $I_{\max}(\cc)$.
A codebook meeting the theoretical bound with equality is said to be optimal. Searching optimal codebooks has been an interesting research topic in recent years. Many classes of optimal codebooks has been constructed \cite{Apl.cd3,MWEB,Apl.pa,MWEB1,Apl.cd1,Apl.cd2,F1,F2,F3,F4}.

It is worthwhile to point out that the constructed codebooks so far have restrictive parameters $N$ and $K$. Hence, many researchers attempt to research asymptotically optimal
codebooks, i.e., $I_{\max}(\cc)$ asymptotically meets the theoretical bound for sufficiently large $K$. One important method to construct asymptotically optimal codebooks is from difference sets of finite abelian gruops which is developed by Ding and Feng \cite{Apl.cd1,Apl.cd2}. In \cite{Apl.cd1,Apl.cd2}, several series of asymptotically codebooks were constructed by using almost difference  sets. In \cite{NM1,NM3Yu,Cao}, asymptotically optimal codebooks were presented by binary row selection sequences. Character sums over finite fields are considered to be useful tools for the design of asymptotically codebooks. Recently, in \cite{AL2,Heng}, Heng et al. obtained two new constructions of infinitely many codebooks by Jacobi sums and their generalizations. In \cite{Luo1,Luo2}, Luo and Cao defined a new character sum called hyper Eisenstein sum and presented two constructions of infinitely many new codebooks achieving the Wech bound. In \cite{Tian}, Tian presented two constructions of codebooks with additive characters over finite fields.

Bent functions are a class of Boolean functions and have important applications in cryptography, code theory and sequences for communications. In cryptography, bent vectorial functions can be used as substitution boxes in block ciphers (ensuring confusion, as explained by Shannon \cite{Sha}. In code theory, they are useful for constructing error correcting codes (Kerdock codes) \cite{Mac}. In sequences, they permit to construct sequences with low correlation \cite{Kar}.

The objective of this paper is to present two constructions of codebooks using generalised bent functions over a ring of integers modulo $Q$. The presented two kinds of codebooks have properties: (1) they are asymptotically optimal with respect to the Welch bound; (2) the parameters of these codebooks are new and flexible. As a comparison with the known ones,  our codebooks are listed in Table \ref{CCCtabl1}.
\begin{table}[!htbp]
 \caption{\small The parameters of codebooks asymptotically meeting the Welch bound}
 \label{CCCtabl1}
 \centering
 \begin{tabular}{|l|l|l|l|l|}
   \hline
   References& Parameters $(N,K)$  &Constraints  \\ \hline
   \cite{NM1}& $\left(p^n,\frac{p-1}{2p}\left(p^n+p^{n/2}\right)+1\right)$   & $p$ is an odd prime  \\ \hline
   \cite{NM5} & $\left(q^2,\frac{(q-1)^2}{2}\right)$ &  $q$ is an odd prime power  \\ \hline
   \cite{LiCC} & $\left(q(q+4),\frac{(q+3)(q+1)}{2}\right)$ &  $q$ is a power of a prime\\ \hline
    \cite{LiCC} & $\left(q,\frac{q+1}{2}\right)$ &  $q$ is a prime power  \\ \hline
   \cite{NM3Yu} & $\left(p^n-1,\frac{p^n-1}{2}\right)$  & $p$ is an odd prime  \\ \hline
   \cite{zhou} & $\left(q^t+q^{t-1}-1,q^{t-1}\right)$   & \tabincell{c}{$q$ is a prime power}  \\ \hline
   \cite{AL2} & $\left((q-1)^\ell+q^{\ell-1},q^{\ell-1}\right)$  & \tabincell{c}{$q\geq4$ is a power of a prime, \\ $\ell>2$}  \\ \hline
    \cite{AL2} & $\left((q-1)^\ell+M,M \right)$  & \tabincell{c}{$M=\frac{(q-1)^\ell+(-1)^{\ell+1}}{q}$, \\ $q$ is a prime power and $\ell>2$ } \\ \hline
   \cite{Luo} & $\left((q^s-1)^m+M,M\right)$ &\tabincell{c}{ $M=\frac{(q^s-1)^m+(-1)^{m+1}}{q}$, \\$s>1$, $m>1$, \\ $q$ is a prime power}  \\ \hline
   \cite{Luo} & $\left((q^s-1)^m+q^{sm-1},q^{sm-1}\right)$  & \tabincell{c}{ $s>1$, $m>1$, \\ $q$ is a prime power}   \\ \hline
   \cite{Luo} & $\left((q^s-1)^m+q^{sm-1},q^{sm-1}\right)$  & \tabincell{c}{ $s>1$, $m>1$, \\ $q$ is a prime power}   \\ \hline
   \cite{Tian} & $\left(q^3+q^2,q^2\right)$  & $q$ is a prime power   \\ \hline
   \cite{Tian} & $\left(q^3+q^2-q,q^2-q\right)$  & $q$ is a prime power  \\ \hline
    Theorem \ref{thm1} & $\left((p_{{\min}}+1)Q^2,Q^2\right)$  & \tabincell{c}{ $Q>1$ is an integer and \\ $p_{{\min}}$ is the smallest \\ prime factor of $Q$}  \\ \hline
     Theorem \ref{thm3} & $\left((p_{{\min}}+1)Q^2-Q,Q(Q-1)\right)$  & \tabincell{c}{ $Q>2$ is an integer and \\ $p_{{\min}}$ is the smallest \\ prime factor of $Q$}  \\ \hline
 \end{tabular}
 \end{table}

This paper is organized as follows. In Section 2, we briefly recall some definitions and notations which will be needed in our discussion. In Section 3 and Section 4, we present our constructions of  codebooks and prove they are asymptotically optimal with respect to the Welch bound. In Section 5, concluding remarks of this paper is given.

\section{Preliminaries}

In this section, we present some notations and preliminaries  which are needed for the proof of the main results. Firstly, a useful lemma is given in the following.
 \begin{lem}[Linear Congruence Theorem, \cite{SJ}]\label{lem1}
 Let $a$, $c$, and $m$ be integers with
$m>1$, and let $g=\gcd(a,m)$.
\begin{enumerate}[(1)]
\item If $g\nmid c$, then the congruence $ax\equiv c \pmod {m}$ has no solutions.
\item If $g\mid c$, then the congruence $ax\equiv c \pmod {m}$ has exactly $g$ incongruent solutions.
\end{enumerate}
\end{lem}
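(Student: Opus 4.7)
The plan is to prove both parts by the standard reduction to the coprime case via Bezout's identity and the structure of solutions to linear Diophantine equations.

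For part (1), I would argue by contrapositive. Suppose $x_0$ is any integer with $a x_0 \equiv c \pmod{m}$. Then there exists an integer $k$ such that $a x_0 - c = km$. Since $g = \gcd(a,m)$ divides both $a$ and $m$, it divides the left side plus $km$, hence divides $c$. Thus a solution can exist only when $g \mid c$, which proves (1).

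For part (2), I would write $a = g a_1$, $m = g m_1$, and $c = g c_1$, where by definition of $g$ we have $\gcd(a_1, m_1) = 1$. The congruence $a x \equiv c \pmod{m}$ becomes $g a_1 x \equiv g c_1 \pmod{g m_1}$, which is equivalent to the reduced congruence $a_1 x \equiv c_1 \pmod{m_1}$. Since $\gcd(a_1, m_1) = 1$, the element $a_1$ is invertible modulo $m_1$, so the reduced congruence has a unique solution $x_0$ in the range $0 \le x_0 < m_1$. The lifts to solutions modulo $m$ are precisely the integers of the form $x_0 + j m_1$ for $j = 0, 1, \ldots, g - 1$; any solution modulo $m$ reduces to $x_0$ modulo $m_1$, and conversely each such lift satisfies the original congruence.

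The final step, and the one small point requiring care, is to verify that the $g$ candidate solutions $x_0, x_0 + m_1, \ldots, x_0 + (g-1) m_1$ are pairwise incongruent modulo $m$. If $x_0 + i m_1 \equiv x_0 + j m_1 \pmod{m}$ with $0 \le i, j < g$, then $m \mid (i - j) m_1$, that is $g m_1 \mid (i - j) m_1$, so $g \mid (i - j)$, forcing $i = j$. This gives exactly $g$ incongruent solutions, completing part (2). Since this lemma is a classical number-theoretic fact, no genuine obstacle is anticipated; the only nontrivial verification is precisely the counting argument for distinct residue classes.
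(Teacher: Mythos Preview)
Your proof is correct and is the standard textbook argument. Note, however, that the paper does not prove this lemma at all: it is quoted directly from Silverman's book \cite{SJ} and used as a black box, so there is no paper proof to compare against. Your argument is essentially the one found in that reference.
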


The following is the well-known Welch bound on $N$, $K$ and $I_{\max}(\cc)$ of a codebook $\cc$.
\begin{lem}{\rm \cite{Welch}}\label{welch}
For any $(N,K)$ codebook $\cc$ with $N>K$,
$$
I_{\max}(\cc)\geq I_{W}=\sqrt{\frac{N-K}{(N-1)K}}.
$$
Moreover, the equality holds if and only if for all pairs of $(i,j)$ with $i\neq j$, it holds that
$$
|\mathbf{c}_i\mathbf{c}_j^H|=\sqrt{\frac{N-K}{(N-1)K}}.
$$
\end{lem}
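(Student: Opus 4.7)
The plan is to derive the Welch bound from spectral properties of the $N\times N$ Gram matrix of the codebook, which is the standard tool in frame theory for correlation bounds. Writing $M$ for the $N\times K$ matrix whose rows are $\mathbf{c}_0,\ldots,\mathbf{c}_{N-1}$, I set $G=MM^H$, so $G_{ij}=\mathbf{c}_i\mathbf{c}_j^H$. First I would record the two structural facts the argument rests on: $G$ is positive semidefinite of rank at most $K$ (since its range lies in $\mathbb{C}^K$), so at most $K$ of its eigenvalues $\lambda_1,\ldots,\lambda_N$ are nonzero, and the unit-norm hypothesis forces every diagonal entry to equal $1$, giving $\Tr(G)=\sum_i\lambda_i=N$.

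The central step is to evaluate $\Tr(G^2)$ in two different ways. One form is spectral, $\Tr(G^2)=\sum_i\lambda_i^2$; the other is entrywise,
\begin{equation*}
\Tr(G^2)=\sum_{i,j=0}^{N-1}|\mathbf{c}_i\mathbf{c}_j^H|^2=N+\sum_{0\leq i\neq j\leq N-1}|\mathbf{c}_i\mathbf{c}_j^H|^2.
\end{equation*}
Applying the Cauchy--Schwarz inequality to the at-most-$K$ nonzero eigenvalues yields $N^2=(\sum_i\lambda_i)^2\leq K\sum_i\lambda_i^2$, and combining this with the entrywise expression gives the frame-type identity
\begin{equation*}
\sum_{i\neq j}|\mathbf{c}_i\mathbf{c}_j^H|^2\geq\frac{N(N-K)}{K}.
\end{equation*}
Bounding each of the $N(N-1)$ off-diagonal terms by $I_{\max}(\cc)^2$ and rearranging then produces $I_{\max}(\cc)\geq\sqrt{(N-K)/((N-1)K)}$, which is the claimed Welch bound.

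For the equality characterisation, I would trace the two inequalities that were used. Cauchy--Schwarz is tight iff all $K$ nonzero eigenvalues coincide, forcing each to equal $N/K$; and the replacement of every off-diagonal term by $I_{\max}(\cc)^2$ is tight iff every $|\mathbf{c}_i\mathbf{c}_j^H|$ with $i\neq j$ equals the common value $I_W$. Both conditions hold simultaneously exactly when the bound is met with equality, giving the ``iff'' clause. There is no serious obstacle beyond selecting the right object (the Gram matrix) and remembering that its rank is at most $K$; the proof is then a short chain combining a rank count, Cauchy--Schwarz on the eigenvalues, and the double computation of $\Tr(G^2)$.
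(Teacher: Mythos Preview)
Your argument is correct and is the standard frame-theoretic proof of the Welch bound via the Gram matrix $G=MM^H$: the trace gives $\sum_i\lambda_i=N$, Cauchy--Schwarz on the at most $K$ nonzero eigenvalues gives $\Tr(G^2)\geq N^2/K$, and comparing with the entrywise expression for $\Tr(G^2)$ yields the bound; the equality discussion is also sound, since the ``if'' direction is immediate and the ``only if'' direction follows by forcing both intermediate inequalities to be tight.

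There is, however, nothing in the paper to compare your proof against: this lemma is quoted with a citation to \cite{Welch} and is not proved in the paper itself. So the only relevant assessment is that your proof is valid and is precisely the classical derivation one finds in the literature on the Welch bound and equiangular tight frames. One very minor refinement you might add for completeness: when you assert that Cauchy--Schwarz is tight iff ``all $K$ nonzero eigenvalues coincide'', note that tightness also forces the rank of $G$ to be exactly $K$ (not merely at most $K$), since the inequality $r\sum\lambda_i^2\leq K\sum\lambda_i^2$ with $r=\rank G$ must also be an equality.
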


Next, we introduce the definition of generalised bent functions. Let $Q$ be a positive integer and $\Zz_Q$ the ring of integers modulo $Q$. Assume that $\xi_Q$ is a primitive $Q$-th root of unity. Denote by $\Zz_Q^m$ the $m$-dimensional vector space over $\Zz_Q$. A function mapping from $\Zz_Q^m$ to $\Zz_Q$ is termed a generalised Boolean function on $m$ variables. For a generalised Boolean function $f$, if the complex Fourier coefficients
$$
F_f(\textbf{a})=\frac{1}{\sqrt{Q^m}}\sum_{\textbf{x}\in \Zz_Q^m}\xi_Q^{f(\textbf{x})-\textbf{a}^T\cdot\textbf{x}}
$$
preserve unit magnitude for any $\textbf{a}\in\Zz_Q^m$, then $f$ is a generalised bent function.

Bent functions are a hot research topic due to their wide applications in cryptography, information theory and coding theory. Kumar et al. \cite{Kumar} introduced the definition of generalised bent functions from $\Zz_Q^m$ to $\Zz_Q$ and gave a class of generalised bent functions in the following lemma.

\begin{lem}[\cite{Kumar}]\label{lem2}
Assume that $Q$ is a positive integer. Let $\Omega(x)$ be an arbitrary permutation and $\Theta(x)$ an arbitrary function on $\Zz_Q$. Then the function
$$
f(x_1,x_2)=x_2\Omega(x_1)+\Theta(x_1)
$$
is generalised bent, where $x_1,x_2\in\Zz_Q$.
\end{lem}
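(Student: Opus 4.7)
My plan is to compute the Fourier coefficient $F_f(\mathbf{a})$ directly for an arbitrary $\mathbf{a}=(a_1,a_2)\in\Zz_Q^2$ and exploit the fact that $f(x_1,x_2)=x_2\Omega(x_1)+\Theta(x_1)$ is affine in $x_2$. Substituting $f$ into the definition and separating the $x_2$-dependent part from the $x_1$-dependent part gives
$$
F_f(\mathbf{a})=\frac{1}{Q}\sum_{x_1\in\Zz_Q}\xi_Q^{\Theta(x_1)-a_1 x_1}\sum_{x_2\in\Zz_Q}\xi_Q^{x_2(\Omega(x_1)-a_2)}.
$$
So the whole computation reduces to understanding the inner sum as a function of $x_1$.

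The inner sum is a standard geometric/character sum over $\Zz_Q$: it equals $Q$ when $\Omega(x_1)-a_2\equiv 0\pmod{Q}$ and equals $0$ otherwise (using $1+\zeta+\cdots+\zeta^{Q-1}=0$ whenever $\zeta=\xi_Q^{c}\neq 1$). This is the step that drives everything. Because $\Omega$ is a \emph{permutation} of $\Zz_Q$, the equation $\Omega(x_1)=a_2$ has exactly one solution $x_1^\ast=\Omega^{-1}(a_2)$, so the outer sum collapses to a single surviving term.

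Carrying this through, the factor $1/Q$ cancels the $Q$ from the inner sum and leaves
$$
F_f(\mathbf{a})=\xi_Q^{\Theta(x_1^\ast)-a_1 x_1^\ast},
$$
which is a root of unity and therefore has $|F_f(\mathbf{a})|=1$. Since $\mathbf{a}$ was arbitrary, $f$ is generalised bent by definition.

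There is no substantive obstacle here; the proof is essentially a two-line orthogonality argument. The only point that needs care is the inner character sum being genuinely $0$ when $\Omega(x_1)\not\equiv a_2\pmod Q$, which requires noting that $\xi_Q$ is a \emph{primitive} $Q$-th root of unity (so $\xi_Q^c=1$ iff $Q\mid c$). The permutation hypothesis on $\Omega$ is used in exactly one place, namely to guarantee that the support of the surviving outer sum is a single element — without it one would only get $|F_f(\mathbf{a})|\in\{0,1,2,\ldots\}$ depending on $|\Omega^{-1}(a_2)|$, and bentness would fail.
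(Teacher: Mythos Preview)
Your argument is correct: separating variables, using orthogonality of the characters $\xi_Q^{cx_2}$ over $\Zz_Q$, and invoking the permutation hypothesis on $\Omega$ to collapse the outer sum to a single unimodular term is exactly the standard proof. Note that the paper does not actually give a proof of this lemma --- it is quoted from Kumar, Scholtz and Welch \cite{Kumar} --- so there is nothing to compare against; your write-up supplies precisely the short orthogonality computation one would expect.
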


For more details on bent functions and generalised bent functions, we refer readers to \cite{Bent1}. Inspired by the generalised bent functions given in Lemma \ref{lem2}, we propose two constructions of codebooks in the following two sections.

\section{The first construction of asymptotically optimal codebooks}\label{S3}

In this section, we present a construction of codebooks and show that the maximum inner-product correction of these codebooks asymptotically achieves the Welch bound. Before proposing our construction, we need to do some preparations.

Suppose that $Q>1$ is an integer and $p_{{\rm min}}$ is the smallest prime factor of $Q$. Denote the standard basis of the $Q^2$-dimensional Hilbert space by $\mathcal{E}_{Q^2}$ which is formed by $Q^2$ vectors of length $Q^2$ as follow:
\begin{eqnarray*}
&&(1,0,0,\cdots,0,0),\\
&&(0,1,0,\cdots,0,0),\\
&&\ \ \ \ \ \ \ \ \ \ \ \ \vdots\\
&&(0,0,0,\cdots,0,1).
\end{eqnarray*}

In the following theorem, we give a new construction of  infinite  many codebooks and evaluate their maximum inner-product correction.

\begin{thm}\label{thm1}
Let symbols be the same as above. For any $a\in\Zz_{p_{{\min}}}$,  $b,u\in\Zz_Q$, define a unit-norm complex vector of length $Q^2$ by
$$
\cF_{a,b,u}=\frac{1}{Q}\left(\xi_Q^{j\left(a\pi(i)+b\right)+u\sigma(i)}\right)_{i,j\in\Zz_Q},
$$
where $\pi(x)$ and $\sigma(x)$ are permutations on $\Zz_Q$. Let
\begin{align}
&\mathcal{F}=\left\{\cF_{a,b,u}:a\in\Zz_{p_{{\min}}},b,u\in\Zz_Q\right\},\nonumber\\
&\cc=\mathcal{F}\cup\mathcal{E}_{Q^2}. \label{def-1}
\end{align}
Then the set $\cc$ is a $\left((p_{{\min}}+1)Q^2,Q^2\right)$ codebook with $I_{\max}(\cc)=\frac{1}{Q}$.
\end{thm}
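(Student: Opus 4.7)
The plan is to prove the theorem by direct computation of all pairwise inner products among vectors in $\cc = \mathcal{F} \cup \mathcal{E}_{Q^2}$, using the smallest-prime-factor condition on $Q$ at the crucial step.

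First I would dispatch the easy bookkeeping: each $\cF_{a,b,u}$ has $Q^2$ entries of modulus $1/Q$, so it is unit-norm; the standard basis vectors in $\mathcal{E}_{Q^2}$ are mutually orthogonal and obviously distinct from every $\cF_{a,b,u}$ (different entry magnitudes); and an inner product of $\cF_{a,b,u}$ with a standard basis vector $\mathbf{e}_{(i_0,j_0)}$ is a single term $\tfrac{1}{Q}\xi_Q^{j_0(a\pi(i_0)+b)+u\sigma(i_0)}$, hence has magnitude exactly $1/Q$. The cardinality $|\cc|=(p_{\min}+1)Q^2$ will then follow from the distinctness of the $\cF_{a,b,u}$, which drops out of the inner-product computation below.

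The heart of the proof is the inner product of two $\mathcal{F}$-vectors. Setting $a''=a-a'$, $b''=b-b'$, $u''=u-u'$, Fubini gives
\begin{equation*}
\cF_{a,b,u}\cF_{a',b',u'}^H=\frac{1}{Q^2}\sum_{i\in\Zz_Q}\xi_Q^{u''\sigma(i)}\sum_{j\in\Zz_Q}\xi_Q^{j(a''\pi(i)+b'')}.
\end{equation*}
The inner $j$-sum is $Q$ when $a''\pi(i)+b''\equiv 0\pmod Q$ and $0$ otherwise, so the whole problem reduces to counting solutions $i$ of the congruence $a''\pi(i)\equiv -b''\pmod Q$. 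I would then split into cases. If $a\neq a'$ then $a''\in\{-(p_{\min}-1),\dots,p_{\min}-1\}\setminus\{0\}$, hence $|a''|<p_{\min}$; since every prime divisor of $Q$ is $\ge p_{\min}$, this forces $\gcd(a'',Q)=1$, and Lemma \ref{lem1} gives exactly one solution $i_0$ (recall $\pi$ is a permutation). The inner product collapses to $\tfrac{1}{Q}\xi_Q^{u''\sigma(i_0)}$, of magnitude $1/Q$. If $a=a'$, the $j$-sum is nonzero only when $b''\equiv 0$, so for $b\neq b'$ the inner product is $0$; when also $b=b'$, the expression becomes $\tfrac{1}{Q}\sum_i\xi_Q^{u''\sigma(i)}=\tfrac{1}{Q}\sum_{k\in\Zz_Q}\xi_Q^{u''k}$ (using that $\sigma$ permutes), which is $1$ when $u=u'$ (same vector) and $0$ otherwise. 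This simultaneously verifies distinctness of the $\cF_{a,b,u}$ and shows every inter-$\mathcal{F}$ correlation is $0$ or $1/Q$.

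Combining the three types of pairs (two basis vectors, basis vs.\ $\cF$, two $\cF$), the maximum modulus is exactly $1/Q$, giving $I_{\max}(\cc)=1/Q$ and $|\cc|=(p_{\min}+1)Q^2$, a $K=Q^2$ codebook as claimed. The main obstacle is really the single step where I need $\gcd(a-a',Q)=1$ for $a\neq a'\in\Zz_{p_{\min}}$; this is exactly where the hypothesis that $p_{\min}$ is the \emph{smallest} prime factor of $Q$ enters, and it is what allows the alphabet for $a$ to be taken as large as $p_{\min}$ while keeping Lemma \ref{lem1} applicable with a unique solution. Everything else is orthogonality of characters. The asymptotic optimality with respect to the Welch bound follows by substituting $N=(p_{\min}+1)Q^2$, $K=Q^2$ in Lemma \ref{welch} and observing that $I_{\max}(\cc)/I_W\to\sqrt{(p_{\min}+1)/p_{\min}}$ as $Q\to\infty$.
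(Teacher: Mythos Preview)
Your proof is correct and follows essentially the same approach as the paper: the same three-case split on pairs of codewords, the same reduction of the $\mathcal{F}$--$\mathcal{F}$ inner product to the inner character sum over $j$, and the same appeal to Lemma~\ref{lem1} with $\gcd(a-a',Q)=1$ to isolate a single surviving $i$. You are slightly more explicit than the paper in justifying why $\gcd(a-a',Q)=1$ and in separating the $a=a',\,b=b'$ sub-case to verify distinctness of the $\cF_{a,b,u}$, but these are refinements of the same argument rather than a different route. (Your closing remark on asymptotic optimality belongs to Theorem~\ref{thm2}; note that the paper takes the limit $p_{\min}\to\infty$ rather than $Q\to\infty$ to get the ratio tending to $1$.)
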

\begin{proof}
According to the definition of codebooks, we deduce that $\cc$ is consisted of $(p_{{\rm min}}+1)Q^2$ codewords with length $Q^2$. In other words, $\cc$ is a codebook with parameters $\left((p_{{\min}}+1)Q^2,Q^2\right)$. Now we turn to the computation of $I_{\max}(\cc)$. Let $\cF_1$, $\cF_2\in \cc$ be two distinct codewords. We distinguish among the following three cases to calculate the correlation of $\cF_1$  and $\cF_2$.

(1) If $\cF_1$, $\cF_2\in\mathcal{E}_{Q^2}$, it is easy to verify that $\left|\cF_1\cF_2^H\right|=0$.

(2) If $\cF_1\in\mathcal{E}_{Q^2}$ and $\cF_2\in\mathcal{F}$, it is obvious that $\left|\cF_1\cF_2^H\right|=\frac{1}{Q}$.

 (3) If $\cF_1$, $\cF_2\in\mathcal{F}$, write $\cF_1=\cF_{a,b,u}$ and $\cF_2=\cF_{\widehat{a},\widehat{b},\widehat{u}}$, where $(a-\widehat{a},b-\widehat{b},u-\widehat{u})\neq(0,0,0)$. Then we deduce that
\begin{eqnarray*}
\cF_1\cF_2^H&=&\frac{1}{Q^2}\sum_{i=0}^{Q-1}\sum_{j=0}^{Q-1}\xi_Q^{j\left(a\pi(i)+b\right)+u\sigma(i)-j\left(\widehat{a}\pi(i)+\widehat{b}\right)-\widehat{u}\sigma(i)}\\
&=&\frac{1}{Q^2}\sum_{i=0}^{Q-1}\xi_Q^{(u-\widehat{u})\sigma(i)}\sum_{j=0}^{Q-1}\xi_Q^{j\left((a-\widehat{a})\pi(i)+b-\widehat{b}\right)}.
\end{eqnarray*}

If $a-\widehat{a}=0$, we obtain
\begin{eqnarray*}
\cF_1\cF_2^H&=&\frac{1}{Q^2}\sum_{i=0}^{Q-1}\xi_Q^{(u-\widehat{u})\sigma(i)}\sum_{j=0}^{Q-1}\xi_Q^{j(b-\widehat{b})}\\
&=&0,
\end{eqnarray*}
where the second identity follows from the fact that $(b-\widehat{b},u-\widehat{u})\neq(0,0)$.

If $a-\widehat{a}\neq 0$, then $\gcd(a-\widehat{a},Q)=1$. By Lemma \ref{lem1}, the congruence $(a-\widehat{a})\pi(x)+b-\widehat{b}\equiv 0 \pmod {Q}$ has only one integer solution $i^{\prime}$ in $\mathbb{Z}_{Q}$. In this case, we have
\begin{eqnarray*}
\cF_1\cF_2^H&=&\frac{1}{Q^2}\sum_{i=0}^{Q-1}\xi_Q^{(u-\widehat{u})\sigma(i)}\sum_{j=0}^{Q-1}\xi_Q^{j\left((a-\widehat{a})\pi(i)+b-\widehat{b}\right)}\\
&=&\frac{1}{Q^2}\sum_{i=0,i\neq i^{\prime}}^{Q-1}\xi_Q^{(u-\widehat{u})\sigma(i)}\sum_{j=0}^{Q-1}\xi_Q^{j\left((a-\widehat{a})\pi(i)+b-\widehat{b}\right)}+\frac{1}{Q^2}\xi_Q^{(u-\widehat{u})\sigma(i^{\prime})}\sum_{j=0}^{Q-1}1\\
&=&\frac{1}{Q}\xi_Q^{(u-\widehat{u})\sigma(i^{\prime})},
\end{eqnarray*}
where the last equality is derived from the fact that $\sum_{j=0}^{Q-1}\xi_Q^{\ell j}=0$ for any $\ell\not\equiv 0 \pmod {Q}$. Therefore, we obtain $\left|\cF_1\cF_2^H\right|\in\{0,\frac{1}{Q}\}$.

The analysis above shows that $I_{\max}(\cc)=\frac{1}{Q}$. This completes the proof of this theorem.
\end{proof}

The next theorem deals with the asymptotical optimality of the codebooks defined in Theorem \ref{thm1}.

\begin{thm} \label{thm2}
Let  symbols be the same as before. Then the maximum inner-product correction $I_{\max}(\cc)$  of the codebook $\cc$ defined in \eqref{def-1} asymptotically meets the Welch bound.
\end{thm}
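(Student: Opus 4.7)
The plan is to carry out a direct calculation: plug the parameters of $\cc$ into the Welch bound and compare with the value $I_{\max}(\cc) = 1/Q$ already established in Theorem \ref{thm1}. Since all structural facts about the codebook are in hand, no new combinatorial or character-sum work is required, only elementary algebra and a single limit.

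First, I would substitute $N = (p_{\min}+1)Q^2$ and $K = Q^2$ into Lemma \ref{welch} to obtain
\[
I_W = \sqrt{\frac{N-K}{(N-1)K}} = \sqrt{\frac{p_{\min}\,Q^2}{\bigl((p_{\min}+1)Q^2-1\bigr)Q^2}} = \sqrt{\frac{p_{\min}}{(p_{\min}+1)Q^2-1}}.
\]
Next, I would form the ratio $I_{\max}(\cc)/I_W$ and simplify:
\[
\frac{I_{\max}(\cc)}{I_W} \;=\; \frac{1/Q}{\sqrt{p_{\min}/\bigl((p_{\min}+1)Q^2-1\bigr)}} \;=\; \sqrt{\frac{(p_{\min}+1)Q^2-1}{p_{\min}\,Q^2}} \;=\; \sqrt{1 + \frac{1}{p_{\min}} - \frac{1}{p_{\min}\,Q^2}}.
\]
Letting $Q\to\infty$ makes the last term vanish, so the ratio tends to $\sqrt{1+1/p_{\min}}$, which is arbitrarily close to $1$ whenever $p_{\min}$ is large. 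This is the standard interpretation of the phrase "asymptotically meets the Welch bound" used in the references summarised in Table \ref{CCCtabl1}.

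There is essentially no obstacle here: the argument is a single substitution followed by a limit. The only interpretive subtlety is that for a \emph{fixed} value of $p_{\min}$ the ratio does not tend to $1$ but to the constant $\sqrt{1+1/p_{\min}}$, so the asymptotic claim is best read as "along any sequence of $Q$ with $p_{\min}\to\infty$ the ratio tends to $1$"; note also that for every admissible $Q$ one has $I_{\max}(\cc)/I_W \le \sqrt{1+1/2} = \sqrt{3/2}$, so the codebook is within a universal constant factor of Welch-optimal in all cases, which is what one would record at the end of the proof.
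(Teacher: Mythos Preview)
Your proposal is correct and follows essentially the same route as the paper: substitute $N=(p_{\min}+1)Q^2$, $K=Q^2$ into the Welch bound, form the ratio $I_{\max}(\cc)/I_W=\sqrt{1+1/p_{\min}-1/(p_{\min}Q^2)}$, and let $p_{\min}\to\infty$. The paper simply takes the single limit $p_{\min}\to\infty$ directly (which automatically forces $Q\to\infty$ since $p_{\min}\mid Q$), whereas you discuss the two-step interpretation and the universal bound $\sqrt{3/2}$; these are welcome clarifications but not a different argument.
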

\begin{proof}
From Theorem \ref{thm1}, we know $\cc$ is a $\left((p_{{ \min}}+1)Q^2,Q^2\right)$ codebook. Obviously, $Q^2<(p_{{\rm min}}+1)Q^2<Q^4$
Note that the corresponding Welch bound of $\cc$ is
$$
I_W=\sqrt{\frac{p_{{\min}}}{p_{{\min}}Q^2+Q^2-1}}.
$$
Then we obtain
$$
\frac{I_{\max}(\cc)}{I_W}=\sqrt{\frac{p_{{\min}}Q^2+Q^2-1}{Q^2p_{{\min}}}}=\sqrt{1+\frac{1}{p_{{\min}}}-\frac{1}{Q^2p_{{\min}}}}.
$$
Observe that
$$\lim_{p_{{\min}}\rightarrow+\infty }\frac{I_{\max}(\cc)}{I_W}=1,$$
which implies that the codebook $\cc$ asymptotically meets the Welch bound.
\end{proof}

 In Table \ref{tab2}, we list some examples of codebooks generated by Theorem \ref{thm1}. The numerical results indicate that the codebooks defined in Theorem \ref{thm1} asymptotically achieve the Welch bound as $p_{\min}$ increases, as predicted in Theorem \ref{thm2}.
\begin{table}[!htbp]
 \caption{\small The parameters of the $(N,K)$ codebook in Theorem \ref{thm1}}
 \label{tab2}
 \centering
 \begin{tabular}{|l|l|l|l|l|l|l|}
   \hline
   $p_{\min}$& $Q$   &$N$ &$K$ &$I_{\max}$ & $I_W$& $I_W/I_{\max}$ \\ \hline
   5& $35$   & $7350$  &1225 & 0.02857& 0.02608 & 0.91293    \\ \hline
   13 & $221$ &  $683774$ & 48841&$0.45249\times 10^{-2}$ & $0.43603\times 10^{-2}$&  0.96362 \\ \hline
   17 & $493$ &  $4374882$  & 243049& $0.20284\times 10^{-2}$&$0.19712\times 10^{-2} $& 0.97183  \\ \hline
    31 & $1891$ &  $114428192$  &3575881 &$0.52882\times 10^{-3}$ & $0.52049\times 10^{-3}$& 0.98425\\ \hline
   43 & $3053$  & $410115596$  & 9320809& $0.32755\times 10^{-3}$&$0.32380\times 10^{-3}$ & 0.98857\\ \hline
  61 & $4453$   & 1229410598 & 19829209&$0.22345\times 10^{-3}$ &$0.22275\times 10^{-2}$ &0.99190 \\ \hline
   73 & $6497$  & 3123614666 & 42211009&$0.15392\times 10^{-3}$ &$0.15287\times 10^{-3}$ & 0.99322 \\ \hline
    83 & $7387$  & 4583692596 &54567796 &$0.13537\times 10^{-3}$ &$0.13456\times 10^{-3}$ &0.99403\\ \hline
   97 & $10961$ &11774065058 & 120143521&$0.91230\times 10^{-4}$ &$0.90766\times 10^{-4}$ &0.99488 \\ \hline
 \end{tabular}
 \end{table}

\section{The second construction of asymptotically optimal codebooks}

In this section, we propose a construction of codebooks by slightly modifying the construction of codebooks in Section \ref{S3}. In addition, we show that these codebooks are asymptotically optimal with respect  to the Welch bound.

Let $Q>1$ be an integer and $p_{{\min}}$ the smallest prime factor of $Q$. Assume that $\mathcal{E}_{Q(Q-1)}$ is a set consisted of all rows of the identity matrix $I_{Q(Q-1)}$. That is to say, $\mathcal{E}_{Q(Q-1)}$ is the standard basis of the Hilbert space with dimension $Q(Q-1)$. Let $\pi(x)$ and $\sigma(x)$ be permutations on $\Zz_Q$. For any $\ell\in\Zz_Q$, we define a set by
\begin{equation}\label{con1}
\mathcal{F}=\left\{\cF_{a,b,u}:a\in\Zz_{p_{{\rm min}}},b,u\in\Zz_Q\right\},
\end{equation}
where
$$
\cF_{a,b,u}=\frac{1}{\sqrt{Q(Q-1)}}\left(\xi_Q^{j(a\pi(i)+b)+u\sigma(i)}\right)_{i\in\Zz_Q\setminus\{\ell\},j\in\Zz_Q}.
$$
Let
\begin{align}\label{def-2}
\cc=\mathcal{F}\bigcup \mathcal{E}_{Q(Q-1)}.
\end{align}
Then $\cc$ is a codebook with parameters $\left(p_{{\min}}Q^2+Q^2-Q,Q(Q-1)\right)$ and the the maximum inner-product correction $I_{\max}(\cc)$ of $\cc$ can be obtained in the following theorem.

\begin{thm}\label{thm3}
Assume that $Q>1$ is an integer and $p_{{\rm min}}$ is the smallest prime factor of $Q$.  Then $\cc$  defined by \eqref{def-2} is a $\left(p_{{\min}}Q^2+Q^2-Q,Q(Q-1)\right)$ codebook with $I_{\max}(\cc)=\frac{1}{\sqrt{Q(Q-1)}}$.
\end{thm}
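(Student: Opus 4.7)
The plan is to mirror the proof of Theorem \ref{thm1} while tracking how the restriction $i\in\Zz_Q\setminus\{\ell\}$ changes the exponential sums. First I would verify the claimed parameters: each $\cF_{a,b,u}$ has $(Q-1)\cdot Q = Q(Q-1)$ entries, every entry has magnitude $1/\sqrt{Q(Q-1)}$, so each $\cF_{a,b,u}$ is a unit vector of length $Q(Q-1)$. Counting $|\mathcal{F}|=p_{\min}\cdot Q\cdot Q$ and $|\mathcal{E}_{Q(Q-1)}|=Q(Q-1)$ gives $|\cc|=p_{\min}Q^2+Q^2-Q$ as asserted.

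Next, I would split the computation of $|\cF_1\cF_2^H|$ for distinct $\cF_1,\cF_2\in\cc$ into three cases. If $\cF_1,\cF_2\in\mathcal{E}_{Q(Q-1)}$, the inner product is $0$. If $\cF_1\in\mathcal{E}_{Q(Q-1)}$ and $\cF_2\in\mathcal{F}$, then $\cF_1\cF_2^H$ is a single entry of $\cF_2$, whose modulus is exactly $1/\sqrt{Q(Q-1)}$. The substantive case is $\cF_1=\cF_{a,b,u}$ and $\cF_2=\cF_{\widehat a,\widehat b,\widehat u}$ with $(a-\widehat a,b-\widehat b,u-\widehat u)\neq(0,0,0)$, where I would factor
$$
\cF_1\cF_2^H=\frac{1}{Q(Q-1)}\sum_{i\in\Zz_Q\setminus\{\ell\}}\xi_Q^{(u-\widehat u)\sigma(i)}\sum_{j=0}^{Q-1}\xi_Q^{j((a-\widehat a)\pi(i)+b-\widehat b)}
$$
and split on whether $a\equiv\widehat a\pmod Q$. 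When $a=\widehat a$, the $j$-sum is $Q\cdot\mathbf{1}[b=\widehat b]$, so the whole expression either vanishes or reduces to a partial character sum over $\Zz_Q\setminus\{\ell\}$, which I would evaluate via $\sum_{i\in\Zz_Q}\xi_Q^{(u-\widehat u)\sigma(i)}=0$ together with the permutation property of $\sigma$. When $a\neq\widehat a$, the key observation is that $|a-\widehat a|<p_{\min}$ together with the minimality of $p_{\min}$ as a prime divisor of $Q$ forces $\gcd(a-\widehat a,Q)=1$; then Lemma \ref{lem1} produces a unique $i'\in\Zz_Q$ solving $(a-\widehat a)\pi(i')+b-\widehat b\equiv 0\pmod Q$, and the outer sum collapses to a single term (or to $0$ if $i'=\ell$), which I would bound directly.

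The main obstacle is the partial exponential sum $\sum_{i\in\Zz_Q\setminus\{\ell\}}\xi_Q^{(u-\widehat u)\sigma(i)}$: unlike the analogous full sum in Theorem \ref{thm1}, it leaves a residual term of modulus $1$ after removing the index $i=\ell$, so a naive application of the argument of Theorem \ref{thm1} yields an upper bound of $1/(Q-1)$ rather than the claimed $1/\sqrt{Q(Q-1)}$. The delicate step, and the part I would spend the most care on, is squeezing this residual against the normalization $1/(Q(Q-1))$ so that the maximum across all $\cF,\cF$ pairs is no larger than the case-(2) contribution of $1/\sqrt{Q(Q-1)}$; only with that reconciliation does the stated equality $I_{\max}(\cc)=1/\sqrt{Q(Q-1)}$ follow.
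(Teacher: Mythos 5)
Your case analysis is the same as the paper's and your computations in each sub-case are on track, but the ``delicate step'' you defer to the end is not a step that can be completed: it is precisely the point where any proof of the statement as literally written must fail. In the sub-case $a=\widehat a$, $b=\widehat b$, $u\neq\widehat u$, the inner $j$-sum equals $Q$ for every $i$, so
$$
\cF_1\cF_2^H=\frac{1}{Q-1}\sum_{i\in\Zz_Q\setminus\{\ell\}}\xi_Q^{(u-\widehat u)\sigma(i)}=-\frac{1}{Q-1}\,\xi_Q^{(u-\widehat u)\sigma(\ell)},
$$
because the full sum over $i\in\Zz_Q$ vanishes ($\sigma$ is a permutation and $u-\widehat u\not\equiv 0\pmod Q$). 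This value has modulus exactly $\frac{1}{Q-1}$ and is actually attained (take $a=\widehat a=0$, $b=\widehat b=0$, $u=0$, $\widehat u=1$); the sub-case $a\neq\widehat a$ with $i'\neq\ell$ likewise yields modulus $\frac{1}{Q-1}$. Since $(Q-1)^2<Q(Q-1)$ for $Q>1$, we have $\frac{1}{Q-1}>\frac{1}{\sqrt{Q(Q-1)}}$, so no reconciliation can push the case-(3) contribution down to the case-(2) value: the true maximum is $I_{\max}(\cc)=\frac{1}{Q-1}$. The value $\frac{1}{\sqrt{Q(Q-1)}}$ appearing in the theorem statement is an error in the paper itself --- the paper's own proof concludes with $I_{\max}(\cc)=\frac{1}{Q-1}$, and the asymptotic analysis in Theorem \ref{thm4} is carried out with $\frac{1}{Q-1}$, not with $\frac{1}{\sqrt{Q(Q-1)}}$. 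You correctly located the obstruction; the resolution is to correct the statement, not to refine the estimate.

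Everything else in your outline --- the parameter count $(p_{\min}Q^2+Q^2-Q,\,Q(Q-1))$, the observation that $0<|a-\widehat a|<p_{\min}$ forces $\gcd(a-\widehat a,Q)=1$, the unique solution $i'$ supplied by Lemma \ref{lem1}, and the vanishing of the sum when $i'=\ell$ --- coincides with the paper's argument and is sound.
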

\begin{proof}
Let $\cF_1$, $\cF_2$ be two distinct codewords in $\cc$. Now we calculate the correlation of  $\cF_1$ and  $\cF_2$ by distinguishing among the following cases.

(1) If $\cF_1$, $\cF_2\in\mathcal{E}_{Q(Q-1)}$, it is obvious that $\left|\cF_1\cF_2^H\right|=0$.

(2) If $\cF_1\in\mathcal{E}_{Q(Q-1)}$ and $\cF_2\in\mathcal{F}$, we have  $\left|\cF_1\cF_2^H\right|=\frac{1}{\sqrt{Q(Q-1)}}$.

 (3) If $\cF_1$, $\cF_2\in\mathcal{F}$, we assume that  $\cF_1=\cF_{a,b,u}$ and $\cF_2=\cF_{\widehat{a},\widehat{b},\widehat{u}}$, where $(a-\widehat{a},b-\widehat{b},u-\widehat{u})\neq(0,0,0)$. Then we have
\begin{eqnarray*}
\cF_1\cF_2^H&=&\frac{1}{Q(Q-1)}\sum_{i=0,i\neq\ell}^{Q-1}\sum_{j=0}^{Q-1}\xi_Q^{j(a\pi(i)+b)+u\sigma(i)-j(\widehat{a}\pi(i)+\widehat{b})-\widehat{u}\sigma(i)}\\
&=&\frac{1}{Q(Q-1)}\sum_{i=0,i\neq\ell}^{Q-1}\xi_Q^{(u-\widehat{u})\sigma(i)}\sum_{j=0}^{Q-1}\xi_Q^{j((a-\widehat{a})\pi(i)+b-\widehat{b})}.
\end{eqnarray*}

If $a-\widehat{a}=0$ and $b-\widehat{b}\neq 0$, `we obtain
\begin{eqnarray*}
\cF_1\cF_2^H&=&\frac{1}{Q(Q-1)}\sum_{i=0,i\neq\ell}^{Q-1}\xi_Q^{(u-\widehat{u})\sigma(i)}\sum_{j=0}^{Q-1}\xi_Q^{j(b-\widehat{b})}\\
&=&0.
\end{eqnarray*}

If $a-\widehat{a}=0$ and $b-\widehat{b}=0$, it follows from the fact that $u-\widehat{u}\not\equiv 0 \pmod {Q}$ that
\begin{eqnarray*}
\cF_1\cF_2^H&=&\frac{1}{Q(Q-1)}\sum_{i=0,i\neq\ell}^{Q-1}\xi_Q^{(u-\widehat{u})\sigma(i)}\sum_{j=0}^{Q-1}1\\
&=&\frac{1}{Q-1}\sum_{i=0}^{Q-1}\xi_Q^{(u-\widehat{u})\sigma(i)}-\frac{1}{Q-1}\xi_Q^{(u-\widehat{u})\sigma(\ell)}\\
&=&-\frac{1}{Q-1}\xi_Q^{(u-\widehat{u})\sigma(\ell)}.
\end{eqnarray*}

If $a-\widehat{a}\neq 0$, then $\gcd(a-\widehat{a},Q)=1$. According to Lemma \ref{lem1}, the congruence $(a-\widehat{a})\pi(x)+b-\widehat{b}\equiv 0 \pmod {Q}$ has only one integer solution $i^{\prime}$ in $\mathbb{Z}_{Q}$.

When  $i^{\prime}\neq\ell$, we can  deduce that
\begin{eqnarray*}
\cF_1\cF_2^H&=&\frac{1}{Q^(Q-1)}\sum_{i=0,i\neq\ell}^{Q-1}\xi_Q^{(u-\widehat{u})\sigma(i)}\sum_{j=0}^{Q-1}\xi_Q^{j((a-\widehat{a})\pi(i)+b-\widehat{b})}\\
&=&\frac{1}{Q^(Q-1)}\sum_{i=0,i\neq i^{\prime},\ell}^{Q-1}\xi_Q^{(u-\widehat{u})\sigma(i)}\sum_{j=0}^{Q-1}\xi_Q^{j((a-\widehat{a})\pi(i)+b-\widehat{b})}+\frac{1}{Q(Q-1)}\xi_Q^{(u-\widehat{u})\sigma(i^{\prime})}\sum_{j=0}^{Q-1}1\\
&=&\frac{1}{Q-1}\xi_Q^{(u-\widehat{u})\sigma(i^{\prime})}.
\end{eqnarray*}

When $i^{\prime}=\ell$,  we have
\begin{eqnarray*}
\cF_1\cF_2^H&=&\frac{1}{Q^(Q-1)}\sum_{i=0,i\neq\ell}^{Q-1}\xi_Q^{(u-\widehat{u})\sigma(i)}\sum_{j=0}^{Q-1}\xi_Q^{j((a-\widehat{a})\pi(i)+b-\widehat{b})}\\
&=&0.
\end{eqnarray*}
Hence, we get that
$$\left|\cF_1\cF_2^H\right|\in\left\{0,\frac{1}{Q-1}\right\}.$$

Summarizing the conclusions in the three cases above, we
obtain
$$I_{\max}(\cc)=\frac{1}{Q-1}.$$
This completes the proof.
\end{proof}


\begin{thm} \label{thm4}
Let symbols be the same as before. If $Q>2$, then the codebook $\cc$ is asymptotically optimal with respect to the Welch bound.
\end{thm}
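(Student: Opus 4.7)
The plan is to mirror the argument of Theorem \ref{thm2}: compute the Welch bound $I_W$ associated with the parameter pair $(N,K)=(p_{\min}Q^2+Q^2-Q,\,Q(Q-1))$, form the ratio $I_{\max}(\cc)/I_W$ using the value $I_{\max}(\cc)=1/(Q-1)$ derived in the proof of Theorem \ref{thm3}, and show that the ratio tends to $1$ as $p_{\min}\to\infty$.

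First I would simplify the numerator of the Welch bound. A short calculation gives the clean cancellation
$$
N-K \;=\; p_{\min}Q^2+Q^2-Q-Q(Q-1) \;=\; p_{\min}Q^2,
$$
so by Lemma \ref{welch},
$$
I_W \;=\; \sqrt{\frac{p_{\min}Q^2}{(p_{\min}Q^2+Q^2-Q-1)\,Q(Q-1)}} \;=\; \sqrt{\frac{p_{\min}Q}{(p_{\min}Q^2+Q^2-Q-1)(Q-1)}}.
$$

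Next I would form the squared ratio and split it into a leading term plus vanishing corrections:
$$
\left(\frac{I_{\max}(\cc)}{I_W}\right)^{2} \;=\; \frac{p_{\min}Q^2+Q^2-Q-1}{p_{\min}\,Q(Q-1)} \;=\; \frac{Q}{Q-1}+\frac{1}{p_{\min}}-\frac{1}{p_{\min}Q(Q-1)}.
$$
Since $p_{\min}$ divides $Q$, letting $p_{\min}\to\infty$ forces $Q\to\infty$ as well; hence $Q/(Q-1)\to 1$ and the two remaining terms tend to $0$, so $I_{\max}(\cc)/I_W\to 1$. The hypothesis $Q>2$ is needed merely so that $Q-1\geq 2$, keeping the codebook non-degenerate.

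The argument carries no substantive obstacle; it is routine algebraic bookkeeping paralleling Theorem \ref{thm2}. The one point requiring care is the initial cancellation $N-K=p_{\min}Q^2$, which is what makes the ratio simplify to a transparently convergent expression — without it, identifying the limit would be noticeably messier.
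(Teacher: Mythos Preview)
Your proof is correct and follows essentially the same approach as the paper's: compute $I_W$ from Lemma~\ref{welch}, form the ratio $I_{\max}(\cc)/I_W$ using $I_{\max}(\cc)=\tfrac{1}{Q-1}$, and show it tends to $1$ as $p_{\min}\to\infty$. Your three-term decomposition of the squared ratio is a slightly condensed version of the paper's four-term split, and your explicit observation that $p_{\min}\mid Q$ forces $Q\to\infty$ (so $Q/(Q-1)\to 1$) fills in a step the paper leaves implicit.
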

\begin{proof}
For $N=p_{{\min}}Q^2+Q^2-Q$ and $K=Q(Q-1)$, by Lemma \ref{welch} we have
$$
I_W=\sqrt{\frac{p_{{\min}}Q}{(p_{{\min}}Q^2+Q^2-Q-1)(Q-1)}}.
$$
Hence,
\begin{eqnarray*}
\frac{I_{\max}(\cc)}{I_W}&=&\sqrt{\frac{p_{{\min}}Q^2+Q^2-Q-1}{Q(Q-1)p_{{\min}}}}\\
&=&\sqrt{\frac{Q}{Q-1}+\frac{Q}{(Q-1)p_{{\min}}}-\frac{1}{(Q-1)p_{{\min}}}-\frac{1}{Q(Q-1)p_{{\min}}}}.
\end{eqnarray*}
Consequently,
$$\lim_{p_{{\min}}\rightarrow+\infty }\frac{I_{\max}(\cc)}{I_W}=1.$$
This completes the proof of this theorem.
\end{proof}
Table \ref{tab3} presents some   parameters of  codebooks derived from Theorem  \ref{thm3}. From Table \ref{tab3},  we can see  $I_W$  is very close to $I_{\max}(\cc)$  as $p_{\min}$ increases. This  means that the codebooks defined in Theorem \ref{thm3}  are asymptotically optimal with respect to the Welch bound for large $p_{\min}$, as predicted in Theorem \ref{thm4}.
\begin{table}[!htbp]
 \caption{\small The parameters of the $(N,K)$ codebook in Theorem \ref{thm3}}
 \label{tab3}
 \centering
 \begin{tabular}{|l|l|l|l|l|l|l|}
   \hline
   $p_{\min}$& $Q$   &$N$ &$K$ &$I_{\max}$ & $I_W$& $I_W/I_{\max}$ \\ \hline
   7& $77$   & $47355$  &5852 & 0.013072& 0.01224 & 0.93618  \\ \hline
   19 & $437$ &  $3818943$ & 190532&$0.22906\times 10^{-2}$ & $0.22331\times 10^{-2}$&  0.97474  \\ \hline
   29 & $1073$ &  $34538797$  & 1150256& $0.93240\times 10^{-3}$&$0.91674\times 10^{-3} $& 0.98321 \\ \hline
    41 & $2173$ &  $198318845$  &4719756 &$0.46023\times 10^{-3}$ & $0.45479\times 10^{-3}$& 0.98803 \\ \hline
   59 & $3599$  & $777164461$  & 12949202& $0.27789\times 10^{-3}$&$0.27557\times 10^{-3}$ & 0.99163 \\ \hline
  67 & $4757$   & 1538770575 & 22624292&$0.21024\times 10^{-3}$ &$0.20869\times 10^{-2}$ & 0.99262 \\ \hline
   79 & $6557$  & 3439533363 & 42987692&$0.15252\times 10^{-3}$ &$0.15156\times 10^{-3}$ & 0.99373 \\ \hline
    89 & $8633$  & 6707573377 &74520056 &$0.11584\times 10^{-3}$ &$0.11520\times 10^{-3}$ &0.99443\\ \hline
   101 & $11009$ &12362193253 & 121187072&$0.90839\times 10^{-4}$ &$0.90393\times 10^{-4}$ &0.99509 \\ \hline
 \end{tabular}
 \end{table}

\section{Concluding remarks}
Employing generalised bent functions, we presented  two classes of codebooks and proved that these constructed codebooks nearly meet the Welch bound.
As a comparison, the parameters of some known classes of asymptotically optimal codebooks with respect to the Welch bound and the new ones are listed in Table \ref{CCCtabl1}.
Obviously, the parameters of our classes of codebooks have not been covered in the literature.


\end{document}